\documentclass[11pt,a4paper]{article}

\usepackage[margin=1in]{geometry}
\usepackage[T1]{fontenc}
\usepackage[utf8]{inputenc}
\usepackage{lmodern}
\usepackage{microtype}
\usepackage{setspace}
\usepackage{amsmath,amssymb,amsthm}
\usepackage{booktabs}
\usepackage{tabularx}
\usepackage{array}
\usepackage{enumitem}
\usepackage[hidelinks]{hyperref}
\usepackage[nameinlink,noabbrev]{cleveref}

\newtheorem{proposition}{Proposition}[section]
\newtheorem{remark}[proposition]{Remark}

\newcolumntype{Y}{>{\raggedright\arraybackslash}X}

\title{NFT-Based Reward Mechanisms:\\
Sybil Farming, Vesting, and Stochastic Verification}

\author{Marco Alberto Javarone\\
\normalsize Exponential Science Foundation \\
\normalsize m.javarone@exp.science
\and Stefanos Leonardos\thanks{Corresponding author} \\
\normalsize King's College London \\
\normalsize \href{mailto:stefanos.leonardos@kcl.ac.uk}{stefanos.leonardos@kcl.ac.uk}
\and Carmine Ventre \\
\normalsize King's College London \\
\normalsize \href{mailto:carmine.ventre@kcl.ac.uk}{carmine.ventre@kcl.ac.uk}}

\date{}

\begin{document}

\maketitle

\begin{abstract}
We study NFT-based reward mechanisms in which a user can create multiple identities and submit
fraudulent claims that mature a reward subject to vesting. We assume that the issuer stochastically verifies claims during the vesting period and that identities can be linked into clusters so that the detection of one identity submitting a fraudulent claim causes the whole cluster to be forfeited through a penalty. A farmer's payoff is then non-linear in the number of identities: rewards increase linearly, while the probability of avoiding detection decreases geometrically. 
We characterise the optimal farming scale in the continuous relaxation of the problem. This allows us to derive a sufficient condition for deterrence and then consider the
issuer's choice of reward vesting and claim verification. Vesting reduces the probability that a fraudulent claim
is paid but also affects genuine participants, while verification is costly for the issuer. 
We characterise the sufficient deterrence frontier for the issuer in terms of auditing cost, vesting schedule, and penalty value.
When small amounts of audit capacity can be added at negligible marginal cost, vesting alone is not optimal. We also distinguish the role of penalties from that of cluster-level auditing.
Finally, we discuss the implications
for NFT reward programmes on high-throughput ledgers, such as Hedera.
\end{abstract}

\section{Introduction}

NFT-based reward mechanisms have been proposed for loyalty programmes, community participation,
events, digital credentials and creator ecosystems, where a digital record can represent access,
status, provenance or a completed action \cite{bao2022,wang2021,razi2024}. An NFT can encode
attributes, eligibility conditions, access rights or records of contribution, and can therefore be
used as a programmable reward rather than only as a speculative asset
\cite{hartwich2022,taherdoost2023}.

Using NFTs does not by itself solve the problem of rewarding participation. The issuer still controls
eligibility, redemption, expiration and transfer rules. User friction and strategic multi-account
behaviour can also affect the operation of an NFT-based programme
\cite{anjum2022,bcg2023,das2021}. The question we consider is therefore more specific: when do the
properties of an NFT and its underlying ledger help to implement a reward mechanism?

The question is particularly relevant for high-throughput ledgers with low and predictable
transaction costs, such as Hedera \cite{hederadocs}. Frequent reward issuance is feasible on such
platforms, but the same low cost also makes repeated or automated claims inexpensive. Thus, at
sufficient scale, verification and incentives can be more important than the cost of issuing the
reward itself.

We model the farmer's scale as a choice variable. If identities are considered separately, the
farmer's payoff is linear in the number of identities and a profitable attack has no finite scale.
We instead assume that anti-Sybil mechanisms can link identities through signals such as device
information, funding paths or correlated behaviour. Detection of one identity then exposes the
other identities in the same cluster. This cluster-level detection makes the farmer's payoff
non-linear in the number of identities and gives the attack size an economic meaning.

The analysis gives three main results. First, the deterrence condition contains a term that is
absent from an account-level model. An additional identity increases the exposure of the cluster to
detection, and the loss associated with this exposure is the penalty. The term is proportional to
the vesting horizon, the audit hazard and the penalty. Second, we characterise the issuer's choice
of vesting and auditing and show that, when the marginal cost of audit capacity at zero is zero,
vesting alone is not optimal. Third, auditing and penalties have different roles. A penalty makes
detection costly, while cluster-level auditing determines how the addition of identities increases
the probability of detection.

The rest of the paper is organised as follows. \Cref{sec:setting} discusses NFT rewards and the
choice of platform. \Cref{sec:model} introduces the model, and \Cref{sec:deterrence} studies
farming and deterrence. \Cref{sec:design} considers the issuer's problem. \Cref{sec:discussion}
discusses the implications of the results, while \Cref{sec:limitations} considers limitations and
possible extensions.

\section{NFT rewards and the platform choice}
\label{sec:setting}

\subsection{What NFT rewards do and do not provide}

NFTs can represent different types of rewards. For example, a token can record attendance,
completion of a training module, an access right, a contribution or membership
\cite{wang2021,hartwich2022,razi2024}. Such records are useful when the reward is intended to be
persistent or when it has to be recognised by more than one organisation.

It is useful to distinguish the ledger-based record from non-fungibility itself. Conventional
systems can also provide different reward levels, user accounts and custom rules. The advantage of
an NFT is instead that the record can be held and verified across applications or organisations,
provided that the relevant design and governance arrangements support this use
\cite{bao2022,bellucci2025,hartwich2022}.

There are also several limitations. NFTs do not remove the issuer's control over expiration,
transfer, redemption or access, and a smart contract makes the rules auditable without making the
programme decentralised \cite{anjum2022,bellucci2025}. Transferability is a design choice. It can
create resale and arbitrage opportunities and may therefore affect incentives to farm
\cite{fridgen2025,sun2025}. Finally, wallets, transaction signing and other technical requirements
can create additional friction for genuine users \cite{bcg2023,anjum2022}.

In what follows, we treat the NFT as a record of a qualifying action rather than as a speculative
asset. Examples include proof of attendance, training completion, public-sector participation and
rewards for digital contributions \cite{razi2024,nunes2024,martha2023}. The mechanism-design
question is whether genuine activity can be rewarded without making false or duplicated claims
profitable.

\subsection{Platform characteristics}

The choice of ledger depends on transaction costs, throughput, finality, ecosystem maturity,
governance and user experience \cite{wang2021,razi2024}. The relative importance of these features
depends on the application. Frequent low-value rewards require predictable costs and sufficient
operational capacity, while high-value or transferable rewards place more weight on liquidity and
tooling. \Cref{tab:platforms} gives a simple comparison.

\begin{table}[ht]
\centering
\small
\begin{tabularx}{\textwidth}{lYYYY}
\toprule
\textbf{Platform} & \textbf{Fees} & \textbf{Performance} &
\textbf{Ecosystem} & \textbf{Illustrative reward setting} \\
\midrule
Ethereum L1 & High and variable & Lower throughput & Very mature NFT and
DeFi ecosystem & High-value, low-frequency rewards \\
Polygon / EVM L2 & Low & High throughput & Mature EVM tooling &
Large-scale loyalty and engagement \\
Immutable & Very low & High throughput & Gaming-oriented NFT tools &
Game-based rewards \\
Solana & Very low & High throughput and low latency & Growing NFT and gaming
ecosystem & Real-time incentives \\
BNB Chain & Low to medium & High throughput & Strong exchange and retail
integration & Retail loyalty \\
Hedera & Fixed and low & High throughput and fast finality &
Enterprise-oriented, smaller NFT ecosystem & Large-scale enterprise rewards \\
\bottomrule
\end{tabularx}
\caption{Platform characteristics relevant to NFT reward mechanisms.}
\label{tab:platforms}
\end{table}

Hedera, for example, has low and predictable fees, high throughput and fast finality, together with an
enterprise-oriented governance model \cite{hederadocs}. These features are relevant for large
loyalty, event, certification and micro-incentive programmes. Its smaller NFT and DeFi ecosystem
also means that the reward should not rely on secondary-market liquidity; for many of the settings
considered here, a non-transferable record is more natural \cite{razi2024}.

The same low issuance cost can also reduce the cost of fraudulent claims. This is not specific to
Hedera. It motivates the model below, which abstracts from the details of a particular platform and
focuses on the incentives created by a reward paid per identity.

\section{Model}
\label{sec:model}

\subsection{Environment}

An issuer offers a reward NFT of value $R>0$ to each identity that submits a qualifying claim.
The reward vests after $T\ge 0$ periods.

A \emph{genuine participant} performs the intended task at cost $c_G\ge0$ and uses a legitimate
identity at cost $\kappa_0\ge0$. Let $k_G:=c_G+\kappa_0>0$. A \emph{farmer} submits a false,
simulated or low-quality claim at cost $c_F\ge0$ and creates an additional identity at cost
$\kappa_S>0$. Let $k_F:=c_F+\kappa_S>0$. We assume $c_G>c_F$, so that performing the intended
action is more costly than submitting the fraudulent claim.

A claim of type $x\in\{G,F\}$ remains eligible from one period to the next with probability
$q_x\in(0,1)$, independently across periods, where $q_G>q_F$, that is, genuine participants are assumed to be more persistent than farmers. A claim is eligible if it is paid at
vesting; it becomes ineligible if the account is abandoned or the holder no longer satisfies the
programme conditions. Thus, a type-$x$ claim is paid with probability $q_x^T$. 

We consider a stochastic and imperfect verification of claims. In each period of the vesting window, each outstanding
claim is audited independently with probability $a\in[0,1]$; conditional on being audited, a
fraudulent claim is detected with probability $s\in(0,1]$. Eligibility persistence and verification
outcomes are independent. Genuine claims pass verification in our baseline model; false positives are
treated in \Cref{subsec:falsepositives}. Let
\[
\beta:=1-as \in [0,1]
\qquad\text{and, for } as<1,\qquad
L(a):=-\ln\beta=-\ln(1-as)\ \ge 0.
\]
Here $\beta$ is the probability of escaping detection in one period and $L(a)$ is the corresponding
hazard rate. When $as=1$, $\beta=0$ and $L(a)=\infty$; this case is treated separately in
\Cref{rem:perfect}. The use of probabilistic verification is related to Ferraioli and Ventre
\cite{ferraioli2018}.

We use two timing conventions. First, a claim that loses eligibility remains on the ledger and can
still be audited during the vesting window. Second, a penalty applies to a fraudulent claim and
not to the reward payment, so it is incurred even if the claim had already lost eligibility when it
was detected. 

\subsection{Cluster detection}

We assume that the farmer's identities are \emph{linked}. Signals such as device fingerprints,
funding paths and correlated behaviour can connect accounts, so detection of one identity can lead
to the review of the others. We consider the strongest version of this mechanism. If any identity
in the cluster is detected during the vesting window, the whole cluster is forfeited. The farmer
loses all rewards and pays a penalty $F\ge0$, which may represent a seized deposit, exclusion from
future campaigns or loss of a verified credential. Evidence of large-scale malicious activity in
NFT markets provides motivation for considering such correlated behaviour \cite{das2021}.

A farmer choosing $n\ge 0$ identities escapes detection entirely with probability $\beta^{nT}$,
so the expected payoff is
\begin{equation}
\label{eq:farmerpayoff}
U_F(n)\;=\;\underbrace{\beta^{nT}\,n\,q_F^{T}R}_{\text{rewards, if the cluster survives}}
\;-\;\underbrace{\bigl(1-\beta^{nT}\bigr)F}_{\text{expected penalty}}
\;-\;\underbrace{k_F\,n}_{\text{identity and claim costs}} .
\end{equation}
Writing $A:=q_F^{T}R$ for the expected gross value of one identity and $x:=\beta^{T}$ for the
probability that one identity survives the window undetected, \eqref{eq:farmerpayoff} becomes
\begin{equation}
\label{eq:farmercompact}
U_F(n)=x^{n}\left(nA+F\right)-F-k_F n ,
\qquad U_F(0)=0 .
\end{equation}

The reward component is linear in $n$, whereas the probability that the cluster survives falls
geometrically when $x<1$. Hence the attack has a finite scale in this case. If $x=1$, which occurs
when there is no effective detection or no vesting window, the payoff is linear in $n$ and a
profitable attack has no finite scale; see \Cref{subsec:boundary}.

The farmer chooses an integer $n\in\mathbb{N}_0$. A policy \emph{deters} farming if $U_F(n)\le0$ for every admissible $n\ge1$. Thus no positive
number of identities is profitable. We use weak deterrence, so equality corresponds to indifference. 

We first consider the continuous relaxation
$n\in[0,\infty)$, which gives a simple deterrence condition. We then compare it with the integer
problem. The condition from the relaxation is sufficient, but not necessary, for integer deterrence,
so using it as a design constraint is conservative.

\section{Deterrence}
\label{sec:deterrence}

\begin{proposition}[Optimal farming scale and deterrence]
\label{prop:scale}
Suppose $k_F>0$, $T>0$ and $0<as<1$, so that $x=\beta^{T}\in(0,1)$ and $L(a)$ is finite. In the
continuous relaxation:
\begin{enumerate}[noitemsep,label=(\roman*)]
\item $n=0$ is the global optimum if and only if
\begin{equation}
\label{eq:deterrence}
q_F^{T}R \;\le\; k_F \;+\; T\,L(a)\,F .
\end{equation}
\item If \eqref{eq:deterrence} fails, there is a unique $n^{*}>0$ maximising $U_F$, with
$U_F(n^{*})>0$, and $U_F$ is strictly increasing on $[0,n^{*})$ and strictly decreasing on
$(n^{*},\infty)$.
\end{enumerate}
For the integer problem:
\begin{enumerate}[noitemsep,label=(\roman*),start=3]
\item \eqref{eq:deterrence} is sufficient for no integer $n\ge1$ to be profitable, but not
necessary.
\item Suppose \eqref{eq:deterrence} fails, so that $n^{*}$ is well defined by (ii). Then the
farmer's payoff over $\mathbb{N}_0$ is maximised at one of the two integers adjacent to $n^{*}$,
namely $\lfloor n^{*}\rfloor$ or $\lceil n^{*}\rceil$ --- not necessarily the nearer of the two,
since $U_F$ need not be symmetric about its peak. 
Hence no integer
$n\ge1$ is profitable if and only if
$\max\bigl\{U_F(\lfloor n^{*}\rfloor),\,U_F(\lceil n^{*}\rceil)\bigr\}\le 0$. When $n^{*}<1$ we
have $\lfloor n^{*}\rfloor=0$ and $U_F(0)=0$, so the comparison is between $n=0$ and $n=1$ and
the condition reduces to $U_F(1)\le0$. If instead \eqref{eq:deterrence} holds, $n=0$ is already
the integer optimum by (i) and (iii).
\end{enumerate}
\end{proposition}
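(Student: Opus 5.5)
The plan is to work with the compact form \eqref{eq:farmercompact}. Write $\lambda:=-\ln x=T\,L(a)\in(0,\infty)$, so that $x^n=e^{-\lambda n}$ and $U_F$ is smooth on $[0,\infty)$. Differentiating gives
\[
U_F'(n)=x^{n}\bigl(A-\lambda(nA+F)\bigr)-k_F,
\qquad
U_F''(n)=\lambda x^{n}\bigl(\lambda(nA+F)-2A\bigr).
\]
At the origin, $U_F'(0)=A-k_F-\lambda F$. Hence \eqref{eq:deterrence} is exactly the condition $U_F'(0)\le 0$. The whole proof rests on showing that $U_F'$ changes sign at most once, and only from positive to negative.

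\textbf{Key step: sign structure of $U_F'$.} The factor $\lambda(nA+F)-2A$ in $U_F''$ is affine and increasing in $n$, since $\lambda A>0$. So there is a threshold $\bar n$, possibly $\le 0$, such that:
\begin{itemize}[noitemsep]
\item $U_F$ is strictly concave on $[0,\bar n)$, so $U_F'$ is strictly decreasing there;
\item $U_F$ is strictly convex on $(\bar n,\infty)$, so $U_F'$ is strictly increasing there.
\end{itemize}
Moreover $x^n(A-\lambda(nA+F))\to 0$, so $U_F'(n)\to -k_F<0$ as $n\to\infty$. On the convex region $U_F'$ increases strictly towards $-k_F$, hence $U_F'<-k_F<0$ throughout $(\bar n,\infty)$. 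Consequently $U_F'$ can be positive only on the initial concave stretch, where it is strictly decreasing. I expect this step to be the main obstacle, because $U_F$ is not globally concave. The argument therefore needs the limit of $U_F'$ at infinity, not just a sign of $U_F''$.

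\textbf{Parts (i) and (ii).} If $U_F'(0)\le0$, then $U_F'<0$ on $(0,\infty)$. So $U_F$ is strictly decreasing and $n=0$ is the global optimum. If $U_F'(0)>0$, then $U_F$ increases near $0$, so $0$ is not optimal; this gives (i). In the same case, continuity and the sign structure give a unique zero $n^*$ of $U_F'$, lying in the concave region. We then have $U_F'>0$ on $[0,n^*)$ and $U_F'<0$ on $(n^*,\infty)$. This gives the strict monotonicity claims, uniqueness of the maximiser, and $U_F(n^*)>U_F(0)=0$, which is (ii).

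\textbf{Part (iii).} Sufficiency follows because under \eqref{eq:deterrence} we have $U_F(n)<U_F(0)=0$ for all $n>0$, in particular for integers. For non-necessity I will give an explicit example with $F=0$. Then \eqref{eq:deterrence} reads $A\le k_F$. Take $A>k_F$ and $T\,L(a)$ large enough that $x<k_F/A$. Then $U_F(1)=xA-k_F<0$. By (ii) the continuous optimum $n^*$ is positive, but since $U_F(1)<0$ and $U_F$ is strictly decreasing beyond $n^*$, we get $n^*<1$. Every integer $n\ge1$ therefore satisfies $U_F(n)\le U_F(1)<0$.

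\textbf{Part (iv).} This is immediate from the unimodality in (ii). On the integers to the left of $n^*$, $U_F$ is maximised at $\lfloor n^*\rfloor$; on those to the right, at $\lceil n^*\rceil$. The lack of symmetry about the peak means either one can be larger, so both must be compared. The stated reductions follow: when $n^*<1$ the comparison is between $U_F(0)=0$ and $U_F(1)$, and when \eqref{eq:deterrence} holds, (i) and (iii) already give $n=0$ as the integer optimum.
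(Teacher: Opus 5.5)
Your proposal is correct and follows the paper's proof essentially step for step. Like the paper, you show that the second derivative changes sign at most once, from negative to positive, and use the limit $U_F'(n)\to -k_F<0$ to rule out a second zero of $U_F'$. You then identify \eqref{eq:deterrence} with $U_F'(0)\le 0$ and obtain (iv) from single-peakedness.

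The only difference is your non-necessity witness in (iii). You build an analytic example with $F=0$, where \eqref{eq:deterrence} reduces to $A\le k_F$ but $U_F(1)=xA-k_F<0$ once auditing is strong enough. The paper instead cites a numerical example with $F=22$ in \Cref{rem:integer}, and both work. One small correction: $n^*<1$ follows from $U_F>0$ on $(0,n^*]$, which holds because $U_F$ increases strictly from $U_F(0)=0$. It does not follow from the decrease beyond $n^*$; you need that only for the next step.
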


\begin{proof}
Let $g(n):=x^{n}(nA+F)$, so that $U_F(n)=g(n)-F-k_Fn$ and $\ln x=-TL(a)<0$. Differentiating,
\[
g'(n)=x^{n}\bigl[A-TL(a)(nA+F)\bigr],
\qquad
g''(n)=x^{n}\,TL(a)\bigl[TL(a)(nA+F)-2A\bigr].
\]
Since $nA+F$ is non-decreasing in $n$, $g''$ changes sign at most once, from negative to positive;
hence $g'$ is decreasing and then increasing, with $g'(n)\to 0$ as $n\to\infty$. It follows that
$U_F'(n)=g'(n)-k_F$ is decreasing and then increasing, with limit $-k_F<0$.

Note $U_F'(0)=A-k_F-TL(a)F$, so \eqref{eq:deterrence} is exactly $U_F'(0)\le 0$. If it holds, then
on the decreasing branch $U_F'(n)\le U_F'(0)\le0$, and on the increasing branch $U_F'(n)$ rises towards
$-k_F<0$ and so remains strictly negative. Hence $U_F'(n)<0$ for all $n>0$ and $U_F(n)<U_F(0)=0$.
If instead $U_F'(0)>0$, then $U_F'$ is positive at $0$ and decreasing on its first branch, so it
crosses zero exactly once, at some $n^{*}>0$; on the increasing branch it rises towards $-k_F<0$
and so stays strictly negative, never crossing back. Thus $U_F'(n)>0$ for $n \in [0,n^{*})$ and $U_F'(n)<0$ for
$n\in (n^{*},\infty)$, giving (ii).

For (iii), sufficiency is immediate: if \eqref{eq:deterrence} holds then $U_F(n)<0$ for all real
$n>0$, hence for all integers $n\ge1$. Necessity fails because \eqref{eq:deterrence} may be
violated with $n^{*}<1$, so that the profitable scales are all fractional and unavailable; see
\Cref{rem:integer}. For (iv), when \eqref{eq:deterrence} fails $n^{*}$ exists by (ii) and $U_F$ is
single-peaked on $[0,\infty)$, so its maximum over $\mathbb{N}_0$ is attained at
$\lfloor n^{*}\rfloor$ or $\lceil n^{*}\rceil$; 
since $U_F(0)=0$, no integer
$n\ge1$ is profitable precisely when
$\max\{U_F(\lfloor n^{*}\rfloor),U_F(\lceil n^{*}\rceil)\}\le0$. When \eqref{eq:deterrence} holds
the claim is immediate from (iii).
\end{proof}

\begin{remark}[Perfect detection]
\label{rem:perfect}
If $as=1$ then $\beta=0$, so every submitted claim is detected within the first audit period and
$\beta^{nT}=0$ for $n\ge1$. Hence $U_F(n)=-F-k_Fn<0$ for all $n\ge1$, while $U_F(0)=0$ because a
farmer who creates no identities submits nothing to audit. Farming is deterred for any $F\ge0$.
Condition \eqref{eq:deterrence} cannot deliver this conclusion --- its right-hand side is
$+\infty$ when $F>0$ and indeterminate when $F=0$ --- which is why the case is handled directly.
\end{remark}

\begin{remark}[What the relaxation gives away]
\label{rem:integer}
A difference between the continuous and integer problems can arise only when
\eqref{eq:deterrence} fails and $n^{*}<1$. For example, at $R=147$, $k_F=10$, $q_F=0.92$,
$T=8$, $a=0.49$, $s=0.3$ and $F=22$, we have $U_F'(0)=37.5>0$ and $n^{*}=0.34$, with
$U_F(n^{*})=5.5>0$. Nevertheless, $U_F(1)=-4.7<0$, so no integer scale is profitable. Since
\eqref{eq:deterrence} is sufficient, we use it as the design constraint. This difference is
relevant in the case $F=0$, discussed in \Cref{subsec:boundary}.
\end{remark}

\subsection{Reading the condition}

Condition \eqref{eq:deterrence} compares the expected gross value of an identity, $q_F^TR$, with
its private cost $k_F$ and the additional detection risk $T L(a)F$. The latter term arises because an
additional identity increases the exposure of the whole cluster to detection, while the penalty $F$
is lost if the cluster is detected. This term is important when the private cost of an identity is
small.

Three features follow. First, the instruments enter as a single product, so the
cluster-detection term vanishes if any of $T$, $L(a)$, $F$ is zero; \Cref{rem:compsub} takes this
up. Vesting therefore also provides verification time.

Second, vesting appears on both sides of \eqref{eq:deterrence} and helps on both. It shrinks the
left-hand side geometrically through $q_F^T$, because farmers are less likely than genuine users
to remain eligible; and it grows the right-hand side linearly through $T\,L(a)F$, by lengthening
the audit window. Genuine participants bear only the first effect, and that asymmetry is the
source of vesting's screening power.

Third, the model has a no-detection benchmark as a special case. Setting $a=0$ gives $L=0$ and reduces
\eqref{eq:deterrence} to $q_F^TR\le k_F$; with $T=1$ and $q_F=1$ this gives the familiar condition
$R\le c_F+\kappa_S$. In this case, $U_F(n)=n(A-k_F)$ is linear, so when deterrence fails the
optimal scale is unbounded. An account-level verification model would also give a linear payoff in
$n$, because detection of one identity would not expose the others. The finite optimum obtained here
instead comes from cluster-level detection: any positive detection probability over a positive audit
window, $s>0$, $a>0$ and $T>0$, gives $x<1$.

\begin{remark}[Transferability]
\label{rem:transferability}
If the reward is transferable and can be resold at an exogenous price $p\ge0$, both types value it
at $R+p$. Thus replacing $R$ by $R+p$ in \eqref{eq:deterrence} makes deterrence harder. A higher
horizon or audit rate can then be required. Transferability can also break the link between the
reward and the person who performed the underlying action. For attendance, training, reputation and
eligibility rewards, non-transferability is therefore often the natural choice.
\end{remark}

\section{Designing the mechanism}
\label{sec:design}

\Cref{prop:scale} gives the deterrence condition. We now consider how the issuer can satisfy it
when vesting and verification have different costs.

\subsection{The deterrence frontier}

Let $\mathcal{D}:=\{(a,T)\in[0,1]\times\mathbb{R}_{+}: q_F^TR\le k_F+TL(a)F\}$ be the set of
deterring policies. If $R\le k_F$, farming is already unprofitable at $T=0$. We therefore focus on
$R>k_F$. In this case, vesting alone deters farming for horizons above
\[
\bar{T}:=\frac{\ln\!\left(k_F/R\right)}{\ln q_F}\;>\;0,
\]
the horizon at which the farmer's expected prize has decayed to its cost. Genuine participation
requires $q_G^TR\ge k_G$. Provided $R\ge k_G$ --- otherwise participation fails even at $T=0$ and
no horizon is feasible --- this gives
\[
T\le T_{\max}:=\frac{\ln\!\left(k_G/R\right)}{\ln q_G}\;\ge\;0 .
\]

\begin{proposition}[Structure of the deterrence frontier]
\label{prop:frontier}
Assume $R>k_F$, $F>0$ and $R\ge k_G$. The horizon $T=0$ is excluded and treated in
\Cref{subsec:boundary}, where it is shown that no $T=0$ policy deters when $R>k_F$. For
$T\in(0,\bar{T})$, deterrence requires an audit rate of at least
\begin{equation}
\label{eq:amin}
a_{\min}(T)=\frac{1}{s}\left[1-\exp\!\left(-\frac{q_F^{T}R-k_F}{TF}\right)\right],
\end{equation}
and $a_{\min}(T)=0$ for $T\ge\bar{T}$. Moreover:
\begin{enumerate}[noitemsep,label=(\roman*)]
\item $a_{\min}(T)$ is strictly decreasing for $T \in (0,\bar{T})$.
\item Let
\begin{equation}
\label{eq:tmin}
T_{\min}:=\inf\{T>0:\ a_{\min}(T)\le1\} .
\end{equation}
If $a_{\min}(T)$ exceeds $1$ for small $T$, then $T_{\min}>0$ is the unique solution of
$q_F^{T}R-k_F=T\,F\,L(1)$, we have $a_{\min}(T_{\min})=1$, and the set of feasible horizons is
$[T_{\min},\infty)$. Otherwise $T_{\min}=0$ and the feasible set is $(0,\infty)$, since $T=0$
itself never deters when $R>k_F$.
\item Deterrence is compatible with genuine participation if and only if $T_{\min}\le T_{\max}$,
with the inequality strict when $T_{\min}=0$.
\item Vesting alone deters without deterring genuine users if and only if $\bar{T}\le T_{\max}$,
equivalently $\ln(k_F/R)/\ln q_F \le \ln(k_G/R)/\ln q_G$.
\end{enumerate}
\end{proposition}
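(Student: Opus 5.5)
The plan is to rewrite the deterrence condition \eqref{eq:deterrence} for fixed $T>0$ as a condition on $L(a)$ and invert it. Write $h(T):=q_F^TR-k_F$. Since $q_F\in(0,1)$, $h$ is strictly decreasing, with $h(0)=R-k_F>0$ and $h(\bar T)=0$, so $h>0$ exactly on $[0,\bar T)$.

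\textbf{Deriving \eqref{eq:amin}.} For $T\ge\bar T$ we have $h(T)\le0\le TL(a)F$ for every $a$, which gives $a_{\min}=0$. For $T\in(0,\bar T)$, the condition reads $L(a)\ge h(T)/(TF)$. Because $L(a)=-\ln(1-as)$ is strictly increasing in $a$ on $[0,1/s)$ and tends to $\infty$ as $as\to1$, we can invert it. This gives $a\ge\frac1s\bigl[1-\exp(-h(T)/(TF))\bigr]$, which is \eqref{eq:amin}. The case $as=1$ deters by \Cref{rem:perfect}, and it is consistent with the formula since the required value is strictly below $1/s$. Here $F>0$ is used to divide.

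\textbf{Monotonicity, part (i).} The map $y\mapsto\frac1s(1-e^{-y})$ is strictly increasing, so it suffices that $\phi(T):=h(T)/T$ is strictly decreasing on $(0,\bar T)$. This holds because $h$ is positive and strictly decreasing there while $1/T$ is positive and strictly decreasing, so their product is strictly decreasing. Also $\phi(T)\to+\infty$ as $T\downarrow0$, so $a_{\min}(T)\to1/s$. Consequently, if $s<1$ then $a_{\min}>1$ for small $T$, and if $s=1$ then $a_{\min}<1$ throughout.

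\textbf{Part (ii).} The condition $a_{\min}(T)\le1$ is equivalent to $h(T)\le TFL(1)$ when $s<1$. The left side is strictly decreasing and the right side is strictly increasing and starts at $0<h(0)$. Hence there is a unique crossing $T_{\min}\in(0,\bar T)$, with $a_{\min}(T_{\min})=1$ by continuity, and by part (i) the feasible set is $[T_{\min},\infty)$ (also including $T\ge\bar T$). When $L(1)=\infty$, i.e.\ $s=1$, every $T>0$ is feasible, so $T_{\min}=0$. The exclusion of $T=0$ follows from \Cref{subsec:boundary}.

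\textbf{Parts (iii) and (iv).} Genuine participation is the interval $[0,T_{\max}]$, so compatibility means this interval meets the feasible set. If $T_{\min}>0$, this holds iff $T_{\min}\le T_{\max}$, since the feasible set is closed. If $T_{\min}=0$, the feasible set is $(0,\infty)$, so we need $T_{\max}>0$, which is the strict inequality. For (iv), vesting alone means $a=0$, i.e.\ $a_{\min}(T)=0$, which holds iff $T\ge\bar T$. Intersecting with $[0,T_{\max}]$ gives $\bar T\le T_{\max}$.

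The main subtlety is the bookkeeping at the boundary $as=1$ (infinite $L$) and making the "$a_{\min}$ exceeds $1$ for small $T$" dichotomy precise. Both reduce to the cases $s<1$ and $s=1$ as above.
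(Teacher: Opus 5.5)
Your proof is correct and follows the paper's argument. You invert the deterrence condition through $L$, obtain strict monotonicity of $a_{\min}$ because $(q_F^TR-k_F)/T$ is a product of positive decreasing factors, and read off (iii) and (iv) as intersections of the feasible set with $[0,T_{\max}]$. The paper instead argues (ii) from continuity and strict monotonicity of $a_{\min}$ alone. You go slightly further by making its dichotomy explicit as $s<1$ versus $s=1$ via $a_{\min}(T)\to 1/s$, and by obtaining the unique crossing directly from the strictly decreasing $q_F^TR-k_F$ against the strictly increasing $TFL(1)$, which is a tidier justification of the ``unique solution'' claim.
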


\begin{proof}
Consider \eqref{eq:deterrence} with equality, that is, $L(a)=(q_F^TR-k_F)/(TF)$; by definition $L(a)=-\ln(1-as)$, 
and then we get \eqref{eq:amin}, and the requirement is vacuous once $q_F^TR\le k_F$, i.e., $T\ge\bar{T}$.
For (i), write $h(T):=(q_F^TR-k_F)/(TF)$. On $(0,\bar{T})$ the numerator is positive and strictly
decreasing while the denominator is positive and strictly increasing, so $h$ is strictly
decreasing; since $L$ is strictly increasing, $a_{\min}(T)=L^{-1}(h(T))$ inherits strict monotonicity.
For (ii), $a_{\min}$ is continuous and strictly decreasing on $(0,\bar T)$ and zero beyond, so
$\{T>0:a_{\min}(T)\le1\}$ is an interval unbounded above. If its infimum is positive, continuity
gives $a_{\min}(T_{\min})=1$, which written out is $q_F^{T}R-k_F=T\,F\,L(1)$, and the infimum is
attained. If the infimum is zero it is not attained, since $T=0$ does not deter. Parts (iii) and
(iv) restate feasibility of
$\mathcal{D}\cap\{0<T\le T_{\max}\}$ and the special case $a=0$; the restriction to $T>0$ costs
nothing, since $T=0$ never deters under $R>k_F$.
\end{proof}

Part (iv) requires a sufficiently large persistence difference relative to the cost difference.
If genuine users are only slightly more persistent than farmers, vesting alone may destroy
participation before it deters farming.

\begin{remark}[Complements in detection, substitutes in design]
\label{rem:compsub}
The terms complement and substitute refer to two different objects. In the detection technology,
the cluster-deterrence term is $T L(a)F$, so the instruments are complementary in the sense that
the term requires all three. Along the deterrence frontier, however, a higher horizon allows a
lower audit rate, and a larger penalty also allows less auditing. The instruments are then
substitutes. We use the terms in these two senses.
\end{remark}

\subsection{The designer's problem}

We first consider $0<s<1$. By \Cref{prop:frontier}, $T_{\min}>0$. Hence, when the participation
constraint is feasible, an optimum is attained. If $s=1$ and $a=1$, the feasible set has no positive
lower bound and the reduced objective converges to zero as $T\rightarrow0$, while $T=0$ is infeasible.
Thus no optimum is attained in that corner; see \Cref{rem:perfect} and \Cref{subsec:boundary}.

Because $a_{\min}$ is decreasing, the designer faces a one-dimensional menu of deterring policies.
The two instruments are costly in different ways.

Vesting is free to the issuer but taxes genuine participants, who collect only with probability
$q_G^T$; the expected reward value destroyed by a horizon $T$ is $(1-q_G^T)R$ per genuine
participant. Verification costs the issuer directly. Let $c(a)$ be the cost per period of
maintaining the capacity to audit a fraction $a$ of outstanding claims, with $c(0)=0$, $c'(a)\ge0$ and $c''(a)\ge0$. Total verification cost over the vesting window is
$T\,c(a)$.

Note that $c(a)$ is a \emph{capacity} cost and does not vary with the realised number of claims.
An issuer paying per claim verified would instead face a cost growing with claim volume, making
verification spend endogenous to the attack the mechanism is meant to prevent. The capacity
formulation keeps the designer's problem separable from the equilibrium scale of farming; where
audit expenditure is genuinely per-claim, \eqref{eq:designer} below understates the cost of a high audit
rate in large programmes.

Let $\lambda>0$ denote the weight the designer places on the cost borne by genuine participants. For $T>0$, define the reduced objective
\begin{equation}
\label{eq:designer}
\Lambda(T):=\lambda\left(1-q_G^{T}\right)R
+T\,c\bigl(a_{\min}(T)\bigr).
\end{equation}
The designer solves
\[
\min_{0<T\le T_{\max},\ a_{\min}(T)\le1}\Lambda(T).
\]

\begin{proposition}[Optimal mix]
\label{prop:mix}
Let $T^{*}$ be an optimal horizon, when an optimum is attained.
\begin{enumerate}[noitemsep,label=(\roman*)]
\item Suppose the optimum is interior in both instruments: $T^{*}\in(0,\bar{T})$, the constraint
$T\le T_{\max}$ is slack, and $a_{\min}(T^{*})<1$. Then $T^{*}$ satisfies
\begin{equation}
\label{eq:foc}
\underbrace{\lambda R\,q_G^{T}\ln\!\left(1/q_G\right)}_{\text{marginal friction cost of delay}}
\;=\;
\underbrace{-\frac{\mathrm{d}}{\mathrm{d}T}\Bigl[\,T\,c\bigl(a_{\min}(T)\bigr)\Bigr]}_{\text{marginal verification saving}} .
\end{equation}
\item If the audit bound binds at an optimum, $a_{\min}(T^{*})=1$, then necessarily
$T^{*}=T_{\min}$, and \eqref{eq:foc} does not characterise it. This arises when audit capacity is
cheap (see, e.g., $d=5$ in \Cref{tab:compstat}).
\item Suppose $c'(0)=0$. Then $T^{*}<\bar{T}$ strictly and $a_{\min}(T^{*})>0$: relying on vesting
alone is never optimal. This holds in both the interior and the boundary case.
\end{enumerate}
\end{proposition}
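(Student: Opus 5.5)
The plan is to treat the designer's problem as a one-dimensional minimisation of $\Lambda$ over the feasible interval $[T_{\min},T_{\max}]$ given by \Cref{prop:frontier}. On $[\bar T,T_{\max}]$ we have $a_{\min}=0$, and there $\Lambda(T)=\lambda(1-q_G^T)R$, since $c(0)=0$.

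For (i), I would first record that $a_{\min}$ is differentiable on $(0,\bar T)$. By \eqref{eq:amin} it is a composition of smooth maps, namely $h(T)=(q_F^TR-k_F)/(TF)$ followed by $L^{-1}(y)=(1-e^{-y})/s$. I would assume $c$ is differentiable, which is implicit in $c'$ being used. Then $\Lambda$ is differentiable at an interior point $T^*\in(0,\bar T)$. Interiority means that $T\le T_{\max}$ is slack and that the constraint $a_{\min}(T)\le1$ is slack. By continuity and the monotonicity in \Cref{prop:frontier}(i), the latter makes $T^*$ an interior point of the feasible interval. So $\Lambda'(T^*)=0$. Differentiating the first term gives $-\lambda R q_G^T\ln q_G=\lambda Rq_G^T\ln(1/q_G)$, and rearranging yields \eqref{eq:foc}.

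For (ii), suppose $a_{\min}(T^*)=1$. By \Cref{prop:frontier}(i), $a_{\min}$ is strictly decreasing on $(0,\bar T)$ and zero beyond, so the level set $\{a_{\min}=1\}$ is a single point. That point is $T_{\min}$ by \Cref{prop:frontier}(ii). Hence $T^*=T_{\min}$, the left endpoint of the feasible set. At an endpoint only a one-sided inequality $\Lambda'(T_{\min}^+)\ge0$ holds, not equality, so \eqref{eq:foc} need not characterise it. I would simply note this and point to the example.

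For (iii), which is the main step, I would argue by contradiction and suppose $T^*\ge\bar T$. On $[\bar T,T_{\max}]$, $\Lambda$ equals $\lambda(1-q_G^T)R$, which is strictly increasing. So the only candidate is $T^*=\bar T$, and this requires $\bar T\le T_{\max}$. It then suffices to show that the left derivative of $\Lambda$ at $\bar T$ is strictly positive, so that slightly smaller horizons do better. The friction term contributes $\lambda Rq_G^{\bar T}\ln(1/q_G)>0$. For the verification term, write $\phi(T)=T\,c(a_{\min}(T))$. Since $c(0)=0$ and $c'(0)=0$, we have $c(a)=o(a)$ as $a\to0$. The map $a_{\min}$ is Lipschitz near $\bar T$ from the left: $h'(\bar T)$ is finite and $L^{-1}$ is smooth with $(L^{-1})'(0)=1/s$. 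Hence $a_{\min}(T)=O(\bar T-T)$, and so $\phi(T)=o(\bar T-T)$. The left derivative of $\phi$ at $\bar T$ is therefore $0$.

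It follows that $\Lambda(T)<\Lambda(\bar T)$ for $T$ slightly below $\bar T$. Such $T$ are feasible: $T<\bar T\le T_{\max}$, and $a_{\min}(T)$ is close to $0<1$, so $T\ge T_{\min}$. This contradicts optimality, so $T^*<\bar T$ and $a_{\min}(T^*)>0$.

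The delicate point is this left-derivative computation. One must check that $a_{\min}$ has a finite one-sided slope at $\bar T$, which holds because $h$ is $C^1$ on a neighbourhood of $\bar T$ within $(0,\infty)$ and $h(\bar T)=0$. The argument makes no use of interiority or the first-order condition, so it covers both the interior case and the boundary case $T^*=T_{\min}$, as the statement asserts.
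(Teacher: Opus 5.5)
Your proposal is correct and follows the paper's proof. Parts (i) and (ii) use the same interior first-order argument and the same endpoint argument. For (iii) you, like the paper, show that just to the left of $\bar T$ the marginal verification saving vanishes, because $c'(0)=0$ and $a_{\min}$ has finite slope $h'(\bar T)/s$ there, while the marginal friction cost $\lambda R q_G^{\bar T}\ln(1/q_G)$ stays positive.

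The differences are minor and slightly tighten the argument. You compute the one-sided derivative of $\Lambda$ at $\bar T$ via $c(a)=o(a)$, rather than taking the limit of $\Lambda'(T)$; this needs only differentiability of $c$ at $0$, not continuity of $c'$. You also state explicitly that $\Lambda$ is strictly increasing on $[\bar T,T_{\max}]$, which the paper leaves implicit.
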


\begin{proof}
For (i), substituting $a=a_{\min}(T)$ into \eqref{eq:designer} and differentiating gives
\eqref{eq:foc}; the slackness conditions make the derivative the relevant optimality condition.
For (ii), by \Cref{prop:frontier} the feasible set is $[T_{\min},\infty)$ and $a_{\min}$ is
strictly decreasing, so $a_{\min}(T^{*})=1$ forces $T^{*}=T_{\min}$; there the feasible set is
one-sided and the first-order condition need not hold with equality. For (iii), consider
$T\rightarrow\bar{T}$. By \Cref{prop:frontier}, $a_{\min}(T)\to0$, and $a_{\min}'$ is bounded near
$\bar{T}$: writing $h(T)=(q_F^TR-k_F)/(TF)$ we have $h(\bar T)=0$ and
$h'(\bar{T})=k_F\ln q_F/(\bar{T}F)$, which is finite, so $a_{\min}'(\bar{T})=h'(\bar{T})/s$ is
finite. Hence $c(a_{\min}(T))\to c(0)=0$ and, because $c'(a_{\min}(T))\to c'(0)=0$, also
$T\,c'(a_{\min}(T))\,a_{\min}'(T)\to0$. The right-hand side of \eqref{eq:foc} therefore tends to
zero while the left-hand side tends to $\lambda R\,q_G^{\bar{T}}\ln(1/q_G)>0$. So $\Lambda'(T)>0$
on a left-neighbourhood of $\bar{T}$, giving $T^{*}<\bar{T}$ and, by \eqref{eq:amin},
$a_{\min}(T^{*})>0$.
\end{proof}

The condition $c'(0)=0$ is substantive: it says the first sliver of audit capacity is free at the
margin, which is what makes a vanishingly small audit programme worth running. The quadratic
specification used in the numerical example in Section \ref{sec:example} below satisfies it. If instead $c'(0)>0$, as under a linear cost $c(a)=da$, the
marginal verification saving at $\bar{T}$ tends to the strictly positive limit
$-\bar{T}\,d\,a_{\min}'(\bar{T})$ rather than to zero, and for $d$ large enough this exceeds the
marginal friction cost, so vesting alone \emph{is} optimal: under the calibration of
\Cref{tab:compstat} with $c(a)=da$ substituted, that happens once $d\ge45.8$. Whether some
auditing is always worth doing is therefore a question about the shape of audit capacity cost near
zero, not a general feature of the problem.

The instruments are therefore asymmetric. When $c'(0)=0$, vesting alone is dominated: once the
required audit rate has reached zero, further delay does not improve deterrence but still imposes a
cost on genuine participants. The opposite corner can be optimal when audit capacity is cheap.

\subsection{Boundary cases}
\label{subsec:boundary}

We next consider several boundary cases.

\emph{No vesting window ($T=0$).} With no audit periods the survival factor is $1$; we define this
case directly rather than by evaluating $\beta^{nT}$, which is indeterminate at $as=1$. Then
$U_F(n)=n(R-k_F)$, linear and unbounded whenever $R>k_F$. Audits run \emph{during} the window, so
a programme that pays out immediately has none in which to conduct them and no deterrent beyond
$k_F$. Verification can make the horizon short, but not zero. No $T=0$ policy deters unless $R\le k_F$, in which case the farmer earns
at most zero anyway.

\emph{No effective verification ($a=0$ or $s=0$).} These are economically distinct: under $a=0$
the issuer runs no audits, whereas under $s=0$ it audits and bears the cost but the technology
detects nothing. Both give $L(a)=0$ and collapse \eqref{eq:deterrence} to $q_F^TR\le k_F$, that is
$T\ge\bar{T}$: only vesting deters, and the payoff is linear in $n$, so failure is unbounded. The distinction matters for diagnosis: persistent farming despite an active audit function can
result from insufficient audit intensity $a$ or detection quality $s$, and increasing audit intensity
addresses only the former.

\emph{No penalty ($F=0$).} Here the continuous relaxation and the integer problem come apart most
sharply. Setting $F=0$ gives $U_F'(0)=q_F^TR-k_F$, which does not involve $a$: at $n=0$ there is
nothing to forfeit, so audit intensity has no first-order effect, and the relaxation suggests a
penalty is indispensable for marginal entry deterrence. The integer problem behaves differently.
With $F=0$ the payoff \eqref{eq:farmercompact} becomes $U_F(n)=n\bigl[x^{n}A-k_F\bigr]$, and since
$x<1$ the function in the bracket is decreasing in $n$. Hence no integer $n\ge1$ is profitable if and only if it
is unprofitable at $n=1$:
\begin{equation}
\label{eq:F0}
\beta^{T}q_F^{T}R \;\le\; k_F .
\end{equation}
Condition \eqref{eq:F0} \emph{does} involve $a$, through $\beta=1-as$, and tightens as auditing
intensifies, so sufficiently strong auditing makes even the first identity unprofitable with no
penalty at all. In the calibration below at $T=4$, \eqref{eq:F0} binds at $a=0.564$: below that
rate farming pays, above it the integer problem is deterred outright, even though the continuous
optimum $n^{*}$ remains strictly positive throughout, falling from $1.25$ at $a=0.2$ to $0.21$ at
$a=0.9$ without reaching zero.

Penalties are therefore not dispensable, but their role is narrower than the relaxation implies. A
penalty makes detection consequential at any scale; cluster-level auditing makes each additional
identity expose the operation to whatever consequence exists, and can by itself push the
profitable scale below one whole identity. Where $F=0$ that second channel is all the designer
has, and it works --- but only because identities are indivisible.

\emph{Imperfect detection ($s<1$).} Since $a_{\min}(T)\to 1/s$ as $T$ falls, $a_{\min}(T)\le1$
fails for short horizons whenever $s<1$, giving the feasibility floor $T_{\min}>0$ of
\eqref{eq:tmin}. At $s=1$ the limit is exactly $1$, so the floor collapses to zero: in the
calibration below the smallest feasible horizon falls from $T=1.49$ at $s=0.6$ to essentially zero
at $s=1$. Combined with the ceiling $T_{\max}$, a deterring and participation-compatible mechanism
exists only if $T_{\min}\le T_{\max}$. When detection quality is poor and rewards are large this
interval can be empty: no combination of delay and verification works, and the designer must
instead reduce $R$, raise $\kappa_S$ or improve $s$.

\subsection{Numerical illustration}\label{sec:example}

\Cref{tab:compstat} reports the optimal policy for a calibrated example with quadratic audit
capacity cost $c(a)=\tfrac{d}{2}a^{2}$, taking $R=100$, $k_F=10$, $k_G=30$, $q_G=0.95$,
$q_F=0.85$, $s=0.6$, $F=50$ and $\lambda=1$. This satisfies $c'(0)=0$, so \Cref{prop:mix} applies.
Here $\bar{T}=14.2$ and $T_{\max}=23.5$, so vesting alone is feasible but nonetheless dominated.
At $d=40$ the optimum is $T^{*}=4.15$ with $a^{*}=0.30$. The objective is $26.6$, compared with
$51.7$ under vesting alone, a reduction of roughly one half.

\begin{table}[ht]
\centering
\small
\begin{tabular}{@{}lcc@{\hspace{2em}}lcc@{\hspace{2em}}lcc@{}}
\toprule
\multicolumn{3}{c}{Audit capacity cost $d$} &
\multicolumn{3}{c}{Genuine persistence $q_G$} &
\multicolumn{3}{c}{Penalty $F$} \\
\cmidrule(r){1-3}\cmidrule(r){4-6}\cmidrule(r){7-9}
$d$ & $T^{*}$ & $a^{*}$ & $q_G$ & $T^{*}$ & $a^{*}$ & $F$ & $T^{*}$ & $a^{*}$ \\
\midrule
5   & 1.49 & 1.00 & 0.88 & 2.68 & 0.56 & 10  & 10.20 & 0.14 \\
20  & 2.71 & 0.55 & 0.91 & 3.20 & 0.44 & 25  & 6.68  & 0.22 \\
40  & 4.15 & 0.30 & 0.94 & 3.86 & 0.34 & 50  & 4.15  & 0.30 \\
80  & 5.53 & 0.18 & 0.97 & 4.98 & 0.22 & 100 & 2.22  & 0.39 \\
160 & 6.92 & 0.11 & 0.99 & 6.84 & 0.11 & 200 & 1.05  & 0.49 \\
\bottomrule
\end{tabular}
\caption{Optimal vesting horizon and audit rate as each parameter varies, holding the others at
their baseline values ($R=100$, $k_F=10$, $q_G=0.95$, $q_F=0.85$, $s=0.6$, $F=50$, $d=40$,
$\lambda=1$).}
\label{tab:compstat}
\end{table}

At $d=5$ the optimum $T^{*}=1.49$, $a^{*}=1.00$ is not an interior solution of \eqref{eq:foc}:
audit capacity is cheap enough that the designer pushes the audit rate to its bound and the
horizon down to $T_{\min}$. That corner is informative --- it is the audit-intensive, short-horizon
regime --- but the first-order condition characterises only the remaining rows.

The entries in the table are monotone, but we do not claim these signs for all parameter values.
The corner at $d=5$ also shows that the optimum need not vary smoothly. In this calibration, higher
audit cost gives a longer horizon and a lower audit rate. Higher genuine persistence gives more
vesting, while a larger penalty gives a shorter horizon and a higher audit rate.

The result for $F$ should be interpreted with care. A larger penalty relaxes the deterrence
constraint and can therefore substitute for delay and auditing. In the numerical example it is
associated with a higher audit rate. This is a property of the calibration and is not a general
comparative-static result.

The calibration also illustrates the effect of reducing the audit rate. At $T=4$, the deterrent
rate is $a_{\min}=0.317$. At $a=0.15$, the farmer still has a profitable scale of about one identity;
at $a=0.05$, the optimal scale is about four and a half. At $a=0$ the scale is unbounded. Thus,
reducing verification can have a large effect when the audit rate becomes small.

\subsection{Extension: false positives}
\label{subsec:falsepositives}

The baseline assumes genuine claims always pass verification. Suppose instead that an audited
genuine claim is erroneously rejected with probability $\eta\in[0,1)$, with $\eta<s$. A genuine
participant then collects only if the identity remains eligible and is never falsely rejected, so
individual rationality requires
\begin{equation}
\label{eq:irfp}
q_G^{T}\left(1-a\eta\right)^{T}R \;\ge\; k_G .
\end{equation}

False positives affect the participation constraint as well as the issuer's cost. Under
\eqref{eq:irfp}, genuine participants are affected by auditing at rate $-\ln(1-a\eta)$, while
fraudulent claims face $L(a)=-\ln(1-as)$. The condition $\eta<s$ ensures that auditing is more
likely to reject a fraudulent claim than a genuine one. If $\eta$ approaches $s$, this distinction
disappears and the persistence gap $q_G-q_F$ becomes more important.

We do not re-solve \eqref{eq:designer} with false positives. \Cref{prop:mix} and
\Cref{tab:compstat} apply to the baseline, where auditing has no cost for genuine participants.
Equation \eqref{eq:irfp} gives the modified participation constraint. Characterising the optimal
$(T,a)$ in this extension is left for future work.

\section{Discussion}
\label{sec:discussion}

The results highlight a distinction between the ledger and the incentive mechanism. Low and
predictable transaction costs make frequent issuance possible, but they do not establish that the
underlying action took place. At high volume, the same infrastructure can therefore support both
legitimate participation and fraudulent claims. \Cref{tab:usecases} summarises the main design
responses for the applications discussed in \Cref{sec:setting}.

\begin{table}[ht]
\centering
\small
\begin{tabularx}{\textwidth}{lYY}
\toprule
\textbf{Use case} & \textbf{Principal vulnerability} &
\textbf{Preferred design response} \\
\midrule
Large-scale loyalty &
Duplicate accounts, artificial qualifying transactions, automated claims &
Non-transferable baseline rewards; claim caps; selective audits; delayed vesting for
high-value benefits \\
Public-sector incentives &
Eligibility, legitimacy, accountability, and inappropriate resale &
Verified identity; non-transferability; high-quality verification; transparent audit and
appeal rules \\
Event-based rewards &
False attendance, ticket arbitrage, and bot collection &
Attendance attestations; post-event verification; non-transferable attendance record;
conditional perks \\
Portable reputation &
Status can be purchased rather than earned &
Strict non-transferability; verification of the underlying action; access-based rather
than resale-based value \\
Training and certification &
Credential fraud and privacy concerns &
Non-transferable credentials; issuer verification; privacy-conscious metadata and access
controls \\
Micro-incentives &
Low-quality or automated contributions at scale &
Sampling-based audits; meaningful penalties or deposits; issuance caps; delayed rewards
where retention is valuable \\
\bottomrule
\end{tabularx}
\caption{Mechanism-design implications across reward applications.}
\label{tab:usecases}
\end{table}

The analysis suggests the following design considerations.

\begin{enumerate}[label=\arabic*.]
    \item \textbf{Specify the qualifying action.} The relevant object is the activity that generates
    the reward and the evidence used to verify it. If the activity is costly or difficult to automate,
    the farmer's cost $k_F$ is higher; otherwise stronger verification may be required.

    \item \textbf{Use NFTs where the record is useful.} A persistent and verifiable record can be
    useful when the reward has to be recognised across applications or organisations.

    \item \textbf{Choose transferability carefully.} Non-transferable rewards are natural for
    credentials, reputation, attendance and eligibility. If rewards can be resold, the additional
    resale value increases the return to farming, as discussed in \Cref{rem:transferability}.

    \item \textbf{Choose verification and penalties jointly.} The audit rate, detection quality and
    penalty all affect deterrence. Auditing without a penalty can still deter an integer attack by
    making the first identity unprofitable, but a penalty provides an additional consequence when a
    cluster is detected.

    \item \textbf{Use linkage where possible.} The cluster effect in the model requires identities
    to be linked. Device, funding and behavioural signals can provide such links, but their strength
    is a separate design question.

    \item \textbf{Use vesting where it serves a purpose.} Vesting can distinguish genuine users from
    farmers when their persistence differs and also provides time for audits. Caps are useful as a
    separate way of limiting financial exposure.
\end{enumerate}

Two points should be kept in mind. First, programmes intended for broad participation should keep
wallet, fee and cryptocurrency requirements low, since these enter $k_G$. Second, the instruments
studied here affect incentives rather than technical capability: they can make farming unprofitable,
but they do not make it impossible.

\section{Limitations and future work}
\label{sec:limitations}

The model makes four main assumptions that limit the scope of the results.

\emph{Cluster detection.} Total forfeiture on any single detection is the strongest form of
linkage. It is the right benchmark when the issuer can trace accounts through devices, funding or
behaviour, and the wrong one when identities are genuinely unlinkable --- bought from unrelated
sellers, say, or leaving no signal that would connect them. We conjecture, but do not derive, that
partial linkage would generally weaken the cluster-level deterrent and move the model towards the
account-level benchmark. Depending on how linkage is structured, the deterrent need not retain the
form $T\,L(a)\,F$ at all; establishing what it becomes would show how much deterrent value
survives imperfect attribution, and matters directly for practice, since linkage strength is
itself a design variable.

\emph{Risk neutrality and a collectible penalty.} The farmer maximises expected value and $F$ is
collectible. Where $F$ represents exclusion from future campaigns or reputational loss rather than
a seized deposit, its effective value is uncertain and probably smaller than its nominal value; a
risk-averse farmer, conversely, would be deterred by less. A seized deposit is one way to make $F$ both certain and verifiable.

\emph{An exogenous reward.} We take $R$ as given. Endogenising it would confront the designer with
the tension that a larger reward attracts genuine participation and farming alike: $R$ raises the
left-hand side of \eqref{eq:deterrence} while relaxing genuine individual rationality, so the
optimal $(R,T,a)$ trades participation against the delay and verification needed to protect it.

\emph{A single farmer and an uncapped pool.} With a capped pool, farmers impose congestion on one
another and deterrence becomes strategic among attackers as well as between attacker and designer.
Heterogeneous farmers, differing in $k_F$ or in detection exposure, would replace the single
threshold with type-dependent attack scales, letting the designer trade deterrence of marginal
farmers against tolerance of infra-marginal ones.

The parameters in the model --- $k_F$, $k_G$, $q_F$, $q_G$, $s$, $F$ and the shape of $c(\cdot)$
near zero --- could in principle be estimated from data from a deployed programme. Such estimates
would make it possible to test the conditions in \Cref{prop:scale,prop:frontier,prop:mix}, in
particular whether $c'(0)=0$ is a reasonable description of audit capacity.

\section{Conclusion}

We studied NFT-based reward mechanisms from a platform and mechanism-design perspective. The model
does not imply that NFTs are generally preferable to loyalty points or database-based rewards. Their
main advantage here is the ability to provide a programmable and verifiable record of contribution,
eligibility, access or status. Low-cost, high-throughput ledgers are suitable for high-volume
programmes, but they do not solve the incentive problem. If claiming a reward is cheap, users can
create multiple identities at low cost.

Treating linked identities as a cluster changes the incentive problem. The farmer has a finite
optimal scale, and the deterrence condition contains the additional term $T L(a)F$, which reflects
the exposure created by an extra identity. Vesting and auditing must then be chosen together because
delay affects genuine participants while verification costs the issuer. The analysis also shows that
penalties and audits are not interchangeable. A penalty makes detection costly, while cluster-level
auditing determines how the risk increases with the number of identities. When audit capacity has
zero marginal cost at zero, vesting alone is not optimal.

The main practical implication is that token issuance is only one part of the mechanism. The
programme also needs verification, appropriate transfer rules and incentives that make genuine
participation preferable to strategic extraction.

\section*{Acknowledgements}

This work was supported by UK Research and Innovation (UKRI) through the Engineering and Physical Sciences Research Council (EPSRC) under grant EP/W034042/1 (SAiFE: A Sandbox for AI-powered Trading in Decentralised Financial Markets).

\end{document}